\newlength{\ketketwidth}
\newlength{\ketwidth}
\newcommand{\kettstylesep}[3]{
    \settowidth{\ketwidth}{$#2\left|#1\right\rangle$}
    \settowidth{\ketketwidth}{$#2\left.\left|#1\right\rangle\right\rangle$}
    \left|#1\right\rangle#3\hspace{\ketwidth}\hspace{-\ketketwidth}
}
\newcommand{\kett}[1]{
    \left.\mathchoice
        {\kettstylesep{#1}{\displaystyle}{\hspace{0.3em}}}
        {\kettstylesep{#1}{\textstyle}{\hspace{0.3em}}}
        {\kettstylesep{#1}{\scriptstyle}{\hspace{0.3em}}}
        {\kettstylesep{#1}{\scriptscriptstyle}{\hspace{0.25em}}}
    \right\rangle
}
\newcommand{\bbrastylesep}[3]{
    \settowidth{\ketwidth}{$#2\left\langle#1\right|$}
    \settowidth{\ketketwidth}{$#2\left\langle\left\langle#1\right|\right.$}
    #3\hspace{\ketwidth}\hspace{-\ketketwidth}\left\langle#1\right|
}
\newcommand{\bbra}[1]{
    \left\langle\mathchoice
        {\bbrastylesep{#1}{\displaystyle}{\hspace{0.3em}}}
        {\bbrastylesep{#1}{\textstyle}{\hspace{0.3em}}}
        {\bbrastylesep{#1}{\scriptstyle}{\hspace{0.3em}}}
        {\bbrastylesep{#1}{\scriptscriptstyle}{\hspace{0.25em}}}
    \right.
}
\newcommand{\kettbbra}[1]{\kett{#1}\hspace{-0.3em}\bbra{#1}}
\newcommand{\one}{\mathds{1}}
\newcommand{\Hil}{\mathcal{H}}
\newcommand\varul[2][3]{\mkern#1mu\underline{\mkern-#1mu#2\mkern-#1mu}\mkern#1mu}
\newcommand{\str}[1]{\varul[0]{#1}}
\definecolor{applegreen}{rgb}{0.55, 0.7, 0.0}
\newcommand{\Suppl}{Supplemental Material}
\newtheorem{definition}{Definition}
\newtheorem{theorem}{Theorem}
\newtheorem{corollary}{Corollary}
\begin{document}

\title{Proper and improper mixed states
serve as different prior beliefs for quantum state retrodiction}

\author{Mingxuan Liu\orcidlink{0009-0009-4361-2641}}
\affiliation{Centre for Quantum Technologies, National University of Singapore, Singapore 117543, Singapore}
\author{Valerio Scarani\orcidlink{0000-0001-5594-5616}}
\affiliation{Centre for Quantum Technologies, National University of Singapore, Singapore 117543, Singapore}
\affiliation{Department of Physics, National University of Singapore, Singapore 117542, Singapore}
\author{Ge Bai\orcidlink{0000-0002-6814-8840}}
\email{gebai@hkust-gz.edu.cn}
\affiliation{Centre for Quantum Technologies, National University of Singapore, Singapore 117543, Singapore}
\affiliation{Thrust of Artificial Intelligence, Information Hub, The Hong Kong University of Science and Technology (Guangzhou), Guangzhou 511453, China}

\date{\today}

\begin{abstract}

A mixed quantum state can be taken as capturing an unspecified form of ignorance; or as describing the lack of knowledge about the true pure state of the system (``proper mixture''); or as arising from entanglement with another system that has been disregarded (``improper mixture''). These different views yield identical density matrices and therefore identical predictions for future measurements. But when used as prior beliefs for inferring the past state from later observations (``retrodiction''), they lead to different updated beliefs. This is a purely quantum feature of Bayesian agency. Based on this observation, we establish a framework for retrodicting on any quantum belief and we prove a necessary and sufficient condition for the equivalence of beliefs. We also illustrate how these differences have operational consequences in quantum state recovery.

\end{abstract}

\maketitle

{\em Introduction---}Mixed states describe incomplete knowledge of the system $S$ under study. The missing pieces of information are also physical, and can be described as a system $R$, correlated with $S$ but to which the agent does not have access. In classical physics, these correlations can always be interpreted as introducing a label: the system $S$ is in one among several possible pure states, but we ignore which one. In quantum physics, there is another possibility: the systems $S$ and $R$ may be entangled, and the joint system may even be in a pure state. D'Espagnat named these two possibilities proper mixtures and improper mixtures, respectively \cite{d1976conceptual}.

Despite their distinct interpretations, proper and improper mixtures are usually thought to be indistinguishable \cite{hughston1993complete,kirkpatrick2001indistinguishability,castellani2023all}: both are represented by identical density matrices and therefore yield the same {\em predictions} for quantum measurements. This indistinguishability has led some to view the distinction as purely interpretative (``going to the Church of the Larger Hilbert Spaces'', as allegedly first quipped by John Smolin), with no operational consequences. However, things become subtle when considering {\em retrodiction}: inference about the past based on current knowledge, here specifically past state inference \cite{barnett-pegg-jeffers,fuchs2002quantum,PhysRevLett.115.180407}, and recovery of irreversible processes \cite{barnum-knill,SutterRennerFawzi_2016,sutter-tomamichel-harrow,PhysRevLett.128.020403,junge,zheng2024near,biswas2024noise}. The common tool for retrodiction is Bayes' rule, for which several quantum extensions have been proposed \cite{caves1986quantum,gardiner2004quantum,luders2006concerning,ozawa1997quantum,fuchs2001quantum,schack2001quantum,warmuth2005bayes,Leifer-Spekkens,tsang2022generalized,parzygnat2022non,surace2023state,parzygnat2023time,parzygnat2023axioms}. Essentially, Bayes' rule is a belief update, and it requires the agent to choose a prior belief about the initial state of the system.

In this Letter, we show that proper and improper mixtures serve as different prior beliefs for quantum state retrodiction. Such a difference is a purely quantum feature, due to the aforementioned possibility of having mixed knowledge about a subsystem while having maximal knowledge on the global system; and to the possibility of different preparations for the same mixture. We further extend this observation beyond the notion of proper and improper mixtures, formulating a general classification theorem of quantum beliefs and including all above mixtures as special cases. Our observation not only refreshes the conceptual understanding of quantum mixtures, but also demonstrates their operational significance in practice.

{\em Distinction between proper and improper mixtures---}We start with an example of one qubit. An agent describes the state prepared by a device as the maximally mixed state $\one/2$ (we don't need to specify whether this description is an \emph{a priori} belief, or arises through some form of tomography). This knowledge is compatible with several scenarios. Let us look at just two:
\begin{enumerate}
    \item\label{case:ensemble} The device is designed to prepare only the states $\ket0$ and $\ket1$. The agent thus believes that, in each round, the state of the system is either $\ket0$ or $\ket1$ with $1/2$ probability each, viewing the mixture as a \textit{proper mixture}. 
    \item\label{case:entangled} The device is an entanglement source that always outputs two qubits $S$ and $R$, but the agent sees only qubit $S$. The agent believes that system to be half of a maximally entangled state $(\ket{00}_{SR}+\ket{11}_{SR})/\sqrt2$, viewing the mixed state of $S$ as an \textit{improper mixture}.
\end{enumerate}

While these two choices of the agent do not affect any prediction, we turn our attention to retrodiction. Suppose that the agent makes a measurement on the prepared state in the Pauli-Z eigenbasis $\{\ket{0},\ket{1}\}$, and wants to infer what has been prepared by the device. In case \ref{case:ensemble}, if the outcome is $+1$ (or $-1$), the agent infers with certainty that the prepared state was $\ket0$ (or $\ket1$). By contrast, in case \ref{case:entangled}, no measurement outcome can modify the belief that the device always prepares a maximally entangled state, because this state is pure. If asked in what state the system $S$ was, the agent will answer $\one/2$ in every round, irrespective of the measurement outcome.

Thus the beliefs in a proper or improper mixture, though identical for all predictions, lead to different updates of our knowledge about the past. That this must be the case follows from the conjunction of two well-known statements. On the one hand, from quantum theory, the purification principle \cite{chiribella2010probabilistic,chiribella2011informational,d2017quantum} states that every mixed state of system $S$ can be seen as the marginal of a pure joint state of $S$ with another system $R$. On the other hand, a tenet of Bayesian inference \cite{pearl,jeffrey, jaynes_2003} is that certainty is immune to updates: no amount of evidence can alter a belief of maximal information (classically, an event with prior probability one). From these, our main observation follows: if a mixture is of unknown origin or is believed to be a proper mixture, the agent admits an element of ignorance that could be updated by further evidence; if the agent's belief is a pure state (albeit involving a system they may know nothing about), this is maximal information and cannot be modified by any further evidence. 

This observation holds for any formulation of the quantum Bayes' rule, since it only requires the property that certainty is immune to updates. In the following, we focus on a specific formulation of quantum Bayes' rule, the Petz map (also known as Petz transpose map and Petz recovery map) \cite{petz1,petz}. We consider the Petz map as the most representative formulation since it is the only known proposal that satisfies axioms derived from classical Bayes' rule \cite{parzygnat2023axioms}. Furthermore, it stems from a principle of minimum belief change when updating with new information \cite{Baimcp}, mirroring how classical Bayes' rule itself arises.

{\em Formulation of quantum beliefs---}We denote by $S(\Hil)$ the set of density matrices on Hilbert space $\Hil$. All quantum processes can be described by completely positive trace-preserving (CPTP) maps.

As a quantum analog of Bayes' rule, the Petz map is a rule for updating quantum beliefs, which are described by density operators. For a process $\mathcal{E}$ from system $S$ to system $T$, the agent chooses an initial belief $\beta_S \in S(\Hil_S)$.
After obtaining the output state $\sigma \in S(\Hil_T)$, the Petz map produces the updated belief about $S$ \cite{petz1,petz}
\begin{align} \label{eq:Petz}
    \mathcal{R}^{\mathcal{E},\beta_S}(\sigma) := \sqrt{\beta_S}\mathcal{E}^\dagger\left(\mathcal{E}(\beta_S)^{-\frac{1}{2}}\sigma\mathcal{E}(\beta_S)^{-\frac{1}{2}}\right) \sqrt{\beta_S}
\end{align}
where $\mathcal{E}^\dag$ is the adjoint map of $\mathcal{E}$ defined by the unique linear map satisfying $\Tr[\mathcal{E}(X)^\dag ~ Y] = \Tr[X^\dag~ \mathcal{E}^\dag(Y)]$, for all operators $X$ and $Y$.

This is the conventional definition of belief update when the quantum Bayes' rule is given by the Petz map. Since it depends on the prior belief on the system alone, it is clearly unable to distinguish the two cases discussed above, as the density matrix of the system is the same in both cases. For that, we need to incorporate possible correlations, classical or quantum, to the description of the belief. Thus, we describe our prior belief as a joint state $\beta$ on $S$ and an additional system $R$. We still work under the assumption that only the output of the process $\mathcal{E}$ acting on $S$ is available to the agent, while system $R$ is hidden. Thus, the complete process is described as $\Tr_R \circ (\mathcal{E}\otimes\textrm{id}) \equiv \mathcal{E} \otimes \Tr$. The resulting belief update on the joint system is then described by the Petz map of the complete process $\mathcal{E} \otimes \Tr$ with the extended prior $\beta$, resulting in the {\em prior-extended Petz map} $\mathcal{R}^{\mathcal{E} \otimes \Tr, \beta}$. The updated belief induced on the system $S$ alone is further obtained by ignoring system $R$ as:
\begin{align}\label{eq:prior-ext_Petz}
    &\mathcal{R}_{\rm ext}^{\mathcal{E},\beta}(\sigma) := (\Tr_R \circ \mathcal{R}^{\mathcal{E} \otimes \Tr, \beta})(\sigma)\\=&\Tr_R\left[\sqrt{\beta}\left(\mathcal{E}^\dag(\mathcal{E}(\beta_S)^{-\frac{1}{2}}\sigma\mathcal{E}(\beta_S)^{-\frac{1}{2}}) \otimes\one_R\right)\sqrt{\beta}\right] \nonumber
\end{align}
where $\beta_S := \Tr_R[\beta]$. For detailed derivation in Kraus form, please refer to \Suppl~\cite{SM}. We will adopt this definition as the \textit{retrodiction map} throughout the remainder of the paper. It generalizes \cref{eq:Petz}, allowing the prior to include correlations with a reference system $R$, and reduces to the original case when no such correlations are present, namely for \(\beta=\beta_S\otimes\beta_R\).

Let us now first revisit our earlier examples of proper and improper mixtures using these tools. The measurement in the $\{\ket0,\ket1\}$ basis can be written as a CPTP map from system $S$ to a classical system as
\begin{align}
    \mathcal{E}_{0/1}(\rho):=\ev{\rho}{0} \ketbra{\str0} + \ev{\rho}{1}\ketbra{\str1}\,,
\end{align}
where $\{\ket{\str0},\ket{\str1}\}$ is a basis of the classical system. A distinction arises when choosing the belief for retrodicting $\mathcal{E}_{0/1}$ as a proper or improper mixture. For the case \ref{case:ensemble} of a proper mixture, since the device prepares either $\ket0$ or $\ket1$, the randomness in the device can be modeled as a perfect coin: the device flips a coin and prepares either $\ket0$ and $\ket1$ according to the outcome. Therefore, we choose the belief as
\begin{align} \label{eq:beta_proper}
    \beta_1 := \frac12\ketbra{0}_S \otimes \ketbra{\str0}_{R_1} + \frac12\ketbra{1}_S \otimes \ketbra{\str1}_{R_1}
\end{align}
where $R_1$ is a classical system that represents the internal coin of the device, which is hidden from the agent.
We use the notation $\ket{\str0}_{R_1},\ket{\str1}_{R_1}$ to emphasize the additional system is classical. Our retrodiction map then yields the belief update
\begin{align}
    \mathcal{R}_{\rm ext}^{\mathcal{E}_{0/1},\beta_1}(\ketbra{\str0}) = \ketbra0, ~\mathcal{R}_{\rm ext}^{\mathcal{E}_{0/1},\beta_1}(\ketbra{\str1}) = \ketbra1,
\end{align}
which matches our earlier discussion. On the other hand, in the case \ref{case:entangled} of the improper mixture, the prior belief is described by the entangled state
\begin{align} \label{eq:beta_improper}
    \beta_2 := \ketbra{\Phi^+}_{SR_2}, ~ \ket{\Phi^+}:=\frac{\ket{00}+\ket{11}}{\sqrt2}
\end{align}
The prior-extended Petz map returns the same state, since it is a pure state; consequently, our retrodiction map maps all outcomes back to the initial belief:
\begin{align}
    \mathcal{R}_{\rm ext}^{\mathcal{E}_{0/1},\beta_2}(\ketbra{\str0})=\mathcal{R}_{\rm ext}^{\mathcal{E}_{0/1},\beta_2}(\ketbra{\str1}) = \frac{\one}{2}
\end{align} We stress again that the fact that a pure state cannot be updated is common to classical and quantum information. What is purely quantum is the fact that this induces no update also for a mixed state, the state of the subsystem under study.

Finally, having an explicit form of the quantum Bayes' rule, we can also discuss another choice of prior belief: the choice of not caring how the ignorance came about and taking the density matrix at face value---in practice here, using $\beta_S:=\one_S/2$ as prior into the original Petz map. This belief behaves as $\beta_1$ for the $\{\ket0,\ket1\}$ measurement, but would be inequivalent if other measurements were considered, for instance that on the Pauli-X eigenbasis $\{\ket+,\ket-\}$. The examples are summarized in \cref{tab:compare3}, together with a fourth one that we'll introduce later.

\begin{table}[]
\renewcommand{\arraystretch}{1.1}
    \centering
    \begin{tabular}{c*{4}{|wc{1cm}}}
    \hline
        \multirow{2}*{Initial belief} & \multicolumn{2}{c|}{Measure 0/1} & \multicolumn{2}{c}{Measure $+/-$} \\
    \cline{2-5}
         & 0 & 1 & $+$ & $-$ \\
    \hline
         No extra system $\beta_S$ & $\ketbra{0}$ & $\ketbra{1}$ & $\ketbra{+}$ & $\ketbra{-}$ \\
    \hline
         $\{\ket0,\ket1\}$ ensemble $\beta_1$ & $\ketbra{0}$ & $\ketbra{1}$ & $\one/2$ &$\one/2$ \\
    \hline
         Improper mixture $\beta_2$ & $\one/2$ &$\one/2$ &$\one/2$ &$\one/2$  \\
    \hline
         Haar random $\beta_{\rm Haar}$ & 
         $\frac{\ketbra{0}+\one}{3}$ & $\frac{\ketbra{1}+\one}{3}$ & $\frac{\ketbra{+}+\one}{3}$ & \rule{0pt}{1.1em}$\frac{\ketbra{-}+\one}{3}$ \\[0.1em]
    \hline
    \end{tabular}
    \caption{\justifying Comparison of updated beliefs on system $S$  after measurements among different initial beliefs. In all cases, the marginal state of the belief on system $S$ is the same: $\beta_S = \Tr_{R_1}[\beta_1]=\Tr_{R_2}[\beta_2] = \Tr_{R}[\beta_{\rm Haar}]:=\one/2$. The most agnostic prior belief without an extra system,  $\beta_S$, is updated to the state revealed by the measurement, for all measurements. By contrast, the proper mixture $\beta_1$ \eqref{eq:beta_proper} is always updated to a state diagonal in the basis $\{\ket0,\ket1\}$. The improper mixture $\beta_2$ \eqref{eq:beta_improper} is not updated because it arises from a pure state, i.e.~from certainty. Update for prior $\beta_{\rm Haar}$ \eqref{eq:Haar_random} shows some bias towards the outcome of the measurement.
    \label{tab:compare3}}
\end{table}

{\em Equivalence between quantum beliefs---}The examples above reveal a structure of quantum beliefs, not restricted to density matrices on the system of interest, but involving an additional latent system that could be classical or quantum. For example, an arbitrary pure state ensemble $\{\ket{\psi_x},p(x)\}$ that averages to the desired density matrix of system $S$ (see Ref.~\cite{hughston1993complete} for a classification of such ensembles), or even an interpolation between proper and improper beliefs.
Other than enumerating various categories of beliefs, we would like a clear identification: what are the possible beliefs that lead to inequivalent retrodictions? 
The following theorem, whose proof is in \Suppl~\cite{SM}, answers this question by giving the equivalence condition between quantum beliefs.

\begin{definition}
    $\beta$ and $\gamma$ are equivalent beliefs if they give the same retrodiction map for all channels, namely $\mathcal{R}_{\rm ext}^{\mathcal{E},\beta}=\mathcal{R}_{\rm ext}^{\mathcal{E},\gamma}$ for all CPTP maps $\mathcal{E}$.
\end{definition}

\begin{theorem}\label{thm:beliefs}
    For channels from $\Hil_S$ to $\Hil_T$, two beliefs $\beta\in S(\Hil_S \otimes \Hil_{R_1})$ and $\gamma\in S(\Hil_S \otimes \Hil_{R_2})$ are equivalent if and only if
    \begin{align} \label{eq:betagamma}
        \Tr_{R_1R_1'}\left[\kettbbra{\sqrt{\beta}}\right] = \Tr_{R_2R_2'}\left[\kettbbra{\sqrt{\gamma}}\right]
    \end{align}
    where the double-ket notation is defined as $\kett{A}:=\sum_{i,j,k,l}\bra{i,j}_{SR}A\ket{k,l}_{SR}\ket{i,j,k,l}_{SRS'R'}$, with systems $S'$ and $R'$ being isomorphic to $S$ and $R$, respectively.
\end{theorem}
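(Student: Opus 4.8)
The plan is to show that the retrodiction map $\mathcal{R}_{\rm ext}^{\mathcal{E},\beta}$ depends on the belief $\beta$ only through one completely positive ``retrodiction kernel'' on $S$, and that the partial trace in \eqref{eq:betagamma} is precisely the Choi matrix of that kernel. For a belief $\beta\in S(\Hil_S\otimes\Hil_R)$, define the linear map $\Phi_\beta$ on operators on $\Hil_S$ by $\Phi_\beta(Z):=\Tr_R[\sqrt{\beta}\,(Z\otimes\one_R)\,\sqrt{\beta}]$, and write $\mathcal{N}_\tau(\sigma):=\tau^{-1/2}\sigma\tau^{-1/2}$. The argument rests on two identities. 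First, $\Phi_\beta(\one_S)=\Tr_R[\beta]=\beta_S$, so \eqref{eq:prior-ext_Petz} reads $\mathcal{R}_{\rm ext}^{\mathcal{E},\beta}=\Phi_\beta\circ\mathcal{E}^\dag\circ\mathcal{N}_{\mathcal{E}(\beta_S)}$ with $\beta_S=\Phi_\beta(\one_S)$; hence the whole map is a function of $\Phi_\beta$ alone. Second, expanding $\kett{\sqrt{\beta}}=(\sqrt{\beta}\otimes\one_{S'R'})\kett{\one_{SR}}$ in the product basis gives $\bra{a,c}\Tr_{RR'}[\kettbbra{\sqrt{\beta}}]\ket{a',c'}=\sum_{j,l}\bra{a,j}\sqrt{\beta}\ket{c,l}\,\overline{\bra{a',j}\sqrt{\beta}\ket{c',l}}$, and the same expansion applied to $\Phi_\beta(\ketbra{c}{c'})$ yields the identical expression; therefore $\Tr_{RR'}[\kettbbra{\sqrt{\beta}}]=\sum_{c,c'}\Phi_\beta(\ketbra{c}{c'})\otimes\ketbra{c}{c'}$ is exactly the Choi matrix of $\Phi_\beta$, with $S'$ in the role of the reference. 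Since the Choi correspondence is a linear bijection, $\Phi_\beta=\Phi_\gamma$ holds if and only if \eqref{eq:betagamma} does, and it remains to prove that $\mathcal{R}_{\rm ext}^{\mathcal{E},\beta}=\mathcal{R}_{\rm ext}^{\mathcal{E},\gamma}$ for all $\mathcal{E}$ if and only if $\Phi_\beta=\Phi_\gamma$.

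The ``if'' direction is immediate: $\Phi_\beta=\Phi_\gamma$ entails $\beta_S=\Phi_\beta(\one_S)=\Phi_\gamma(\one_S)=\gamma_S$, so $\mathcal{N}_{\mathcal{E}(\beta_S)}=\mathcal{N}_{\mathcal{E}(\gamma_S)}$ and, together with $\Phi_\beta=\Phi_\gamma$, the factorization gives $\mathcal{R}_{\rm ext}^{\mathcal{E},\beta}=\mathcal{R}_{\rm ext}^{\mathcal{E},\gamma}$ for every channel $\mathcal{E}$.

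For the ``only if'' direction I would feed the hypothesis two carefully chosen channels. Take first the constant ``replacer'' channel $\mathcal{E}_\rho(\cdot):=\Tr[\cdot]\,\rho$ for a full-rank state $\rho$ on an arbitrary $\Hil_T$; then $\mathcal{E}_\rho^\dag(Y)=\Tr[\rho Y]\,\one_S$ and $\mathcal{E}_\rho(\beta_S)=\rho$, so $\mathcal{R}_{\rm ext}^{\mathcal{E}_\rho,\beta}(\sigma)=\Phi_\beta(\Tr[\sigma]\,\one_S)=\Tr[\sigma]\,\beta_S$, and equality of the retrodiction maps forces $\beta_S=\gamma_S=:\rho_0$. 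Next take $\mathcal{E}=\mathrm{id}_S$, for which $\mathcal{R}_{\rm ext}^{\mathrm{id},\beta}(\sigma)=\Phi_\beta(\rho_0^{-1/2}\sigma\rho_0^{-1/2})$. Since $\mathcal{R}_{\rm ext}$ is linear in $\sigma$ and $\sigma\mapsto\rho_0^{-1/2}\sigma\rho_0^{-1/2}$ is an invertible linear map (restricting to $\mathrm{supp}\,\rho_0$ if $\rho_0$ is not full rank; both $M_\beta$ and $M_\gamma$ are supported there because $\Tr_{S'}M_\beta=\beta_S$), the hypothesis gives $\Phi_\beta(Z)=\Phi_\gamma(Z)$ for all operators $Z$, i.e.\ $\Phi_\beta=\Phi_\gamma$, equivalently \eqref{eq:betagamma}.

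I expect the main obstacle to be the second identity, i.e.\ the double-ket bookkeeping: writing $\kett{\sqrt{\beta}}=(\sqrt{\beta}\otimes\one_{S'R'})\kett{\one_{SR}}$, performing $\Tr_{RR'}$ correctly, and matching the outcome to $\Phi_\beta$ while tracking the complex conjugation (equivalently, a transpose on the reference) built into the double-ket convention. Once this identification of $\Tr_{RR'}[\kettbbra{\sqrt{\beta}}]$ with the Choi matrix of $\Phi_\beta$ is in place, together with the factorization $\mathcal{R}_{\rm ext}^{\mathcal{E},\beta}=\Phi_\beta\circ\mathcal{E}^\dag\circ\mathcal{N}_{\mathcal{E}(\beta_S)}$, both implications are short and the two channel choices finish the proof.
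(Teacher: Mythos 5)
Your proposal is essentially correct, and its backbone --- the factorization $\mathcal{R}_{\rm ext}^{\mathcal{E},\beta}=\Phi_\beta\circ\mathcal{E}^\dag\circ\mathcal{N}_{\mathcal{E}(\beta_S)}$ with $\Phi_\beta(Z)=\Tr_R[\sqrt{\beta}(Z\otimes\one_R)\sqrt{\beta}]$, plus the identification of $\Tr_{RR'}[\kettbbra{\sqrt{\beta}}]$ with the Choi matrix of $\Phi_\beta$ --- is the same mechanism the paper uses, just packaged more conceptually. The paper's sufficiency argument is exactly your ``if'' direction, and the paper's index computation showing that its operator identity $\Tr_{R_1}[\sqrt{\beta}(\ketbra{i}{j}\otimes\one)\sqrt{\beta}]=\Tr_{R_2}[\sqrt{\gamma}(\ketbra{i}{j}\otimes\one)\sqrt{\gamma}]$ is equivalent to \cref{eq:betagamma} is the same double-ket bookkeeping you flag as the main technical step (your computation of it checks out, including the conjugation from Hermiticity of $\sqrt\beta$). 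Where you genuinely differ is the choice of channels in the ``only if'' direction: the paper takes an informationally complete POVM $\{F_k\}$ and the measure-and-prepare channels $\mathcal{E}_k(\rho)=\frac12\Tr[\rho]\one_T/d_T+\frac12\Tr[F_k\rho]\ketbra{0}+\frac12\Tr[(\one-F_k)\rho]\ketbra{1}$, then probes each with $\sigma=\mathcal{E}_k(\beta_S)^{1/2}\ketbra{0}\mathcal{E}_k(\beta_S)^{1/2}$; you use the replacer channel followed by the identity channel. Your choice is shorter, but it buys less generality: the theorem fixes the output space $\Hil_T$, and the identity channel (or an isometric embedding) exists only when $d_T\ge d_S$, whereas the paper's channels need only two distinguishable states in $T$. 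If the quantifier ``for all CPTP maps'' is read with $\Hil_T$ fixed and $d_T<d_S$, your necessity argument has a gap; it is repaired exactly by substituting the paper's $\mathcal{E}_k$, whose full-rank outputs also sidestep the pseudo-inverse/support issue that you otherwise handle correctly via the marginals of the Choi matrix.
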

Notice that $\beta_S=\gamma_S$ is expected (after all, equivalent beliefs should lead to identical predictions), but we had not imposed it as a constraint in the definition: it follows from the theorem as a necessary condition. Some sufficient conditions for equivalence are:

\begin{enumerate}
    \item\label{case:equ1} $\beta\in S(\Hil_S \otimes \Hil_{R_1})$ and $\beta\otimes\sigma\in S(\Hil_S\otimes \Hil_{R_1} \otimes \Hil_{R_2})$ are equivalent beliefs for any $\sigma\in S(\Hil_{R_2})$. 
    \item\label{case:equ2} For any isometry $V: \Hil_{R_1}\to\Hil_{R_2}$, $\beta\in S(\Hil_S \otimes \Hil_{R_1})$ and $(\one_S \otimes V)\beta(\one_S \otimes V^\dag)\in S(\Hil_S \otimes \Hil_{R_2})$ are equivalent beliefs.
    \item\label{case:equ3} 
    More generally, let $\mathcal{P}$ be a reversible channel acting on system $R$, i.e.\ there exists a CPTP map $\mathcal{Q}$ such that $\mathcal{Q}\circ\mathcal{P}=\textrm{id}_R$, the identity map on system $R$. Then, $\beta$ and $(\textrm{id}_S\otimes\mathcal{P})(\beta)$ are equivalent beliefs.
\end{enumerate}

Condition \ref{case:equ1} states that considering an uncorrelated additional system does not affect retrodiction, as one would expect. Condition \ref{case:equ2} states that a quantum belief is invariant under local isometries on its reference system, indicating that an improper mixture represents a certain quantum belief regardless of which purification is chosen. Condition \ref{case:equ3} is the combination of the former two, because every reversible channel can be decomposed into a tensor product with an additional system followed by an isometry \cite{shirokov2013reversibility}.

A proper mixture is described by giving a state ensemble $\{\ket{\psi_x},p(x)\}$. It corresponds to the belief that the state is in $\ket{\psi_x}$ with confidence $p(x)$. The corresponding quantum belief cannot consist only of the density matrix $\sum_x p(x)\ketbra{\psi_x}$, as it needs to encode classical knowledge, such as the basis of an unobserved measurement or the specification of a probabilistic state preparation device. Rather, it will be described by the joint state $\beta=\sum_x p(x) \ketbra{\psi_x}_S\otimes\ketbra{\str{x}}_R$, where $\{\ket{\str{x}}\}$ is an orthonormal basis of a classical system $R$. For this class of beliefs, the equivalence condition given by \cref{thm:beliefs} leads to the following result.

\begin{corollary} \label{cor:ensemble}
    Two pure state ensembles $\{\ket{\psi_x},p(x)\}$ and $\{\ket{\phi_y},q(y)\}$ are equivalent beliefs if and only if their second moments are equal, namely
    \begin{align} \label{eq:ensemble_equiv}
        \sum_x p(x) \ketbra{\psi_x}^{\otimes2} = \sum_y q(y) \ketbra{\phi_y}^{\otimes2}\,.
    \end{align}
\end{corollary}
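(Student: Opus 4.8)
The plan is to feed the ensemble beliefs directly into \cref{thm:beliefs}. Write $\beta=\sum_x p(x)\,\ketbra{\psi_x}_S\otimes\ketbra{\str x}_{R_1}$ and $\gamma=\sum_y q(y)\,\ketbra{\phi_y}_S\otimes\ketbra{\str y}_{R_2}$, with $\{\ket{\str x}\}$ and $\{\ket{\str y}\}$ orthonormal bases of the classical registers. The first observation is that, because the vectors $\sqrt{p(x)}\,\ket{\psi_x}_S\ket{\str x}_{R_1}$ are mutually orthogonal (the $R_1$ labels are orthogonal and the $\ket{\psi_x}$ are unit vectors), this expression is already a spectral decomposition of $\beta$, so the square root is taken term by term: $\sqrt{\beta}=\sum_x\sqrt{p(x)}\,\ketbra{\psi_x}_S\otimes\ketbra{\str x}_{R_1}$.

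Next I would evaluate the left-hand side of \cref{eq:betagamma}. Factorizing the double ket over $S\otimes R_1$ and using $\kett{\ketbra{\psi}}=\ket{\psi}\otimes\ket{\psi^*}$, where $\ket{\psi^*}$ denotes the complex conjugate in the fixed basis, one gets
\begin{align}
    \kett{\sqrt{\beta}}=\sum_x\sqrt{p(x)}\,\ket{\psi_x}_S\ket{\psi_x^*}_{S'}\ket{\str x}_{R_1}\ket{\str x}_{R_1'}\,.
\end{align}
Forming $\kettbbra{\sqrt{\beta}}$ and tracing out $R_1R_1'$: since the same label $x$ appears in $R_1$ and in $R_1'$, the partial trace sends $\ket{\str x}\bra{\str{x'}}_{R_1}\otimes\ket{\str x}\bra{\str{x'}}_{R_1'}$ to $\delta_{xx'}$, killing all off-diagonal terms, so that
\begin{align}
    \Tr_{R_1R_1'}\!\left[\kettbbra{\sqrt{\beta}}\right]=\sum_x p(x)\,\ketbra{\psi_x}_S\otimes\ketbra{\psi_x^*}_{S'}\,,
\end{align}
and analogously for $\gamma$. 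By \cref{thm:beliefs}, the two ensembles are then equivalent beliefs if and only if $\sum_x p(x)\,\ketbra{\psi_x}_S\otimes\ketbra{\psi_x^*}_{S'}=\sum_y q(y)\,\ketbra{\phi_y}_S\otimes\ketbra{\phi_y^*}_{S'}$.

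The final step is to remove the conjugation and land on \cref{eq:ensemble_equiv}. Applying the partial transpose $\mathcal{I}_S\otimes T_{S'}$, which is a linear bijection on operators on $\Hil_S\otimes\Hil_{S'}$, and using $\ketbra{\psi^*}=\ketbra{\psi}^{T}$ for the Hermitian operator $\ketbra{\psi}$, converts the displayed identity into $\sum_x p(x)\,\ketbra{\psi_x}^{\otimes2}=\sum_y q(y)\,\ketbra{\phi_y}^{\otimes2}$ after identifying $S'\cong S$; since a bijection turns equalities into equalities both ways, this equivalence is exact. I do not expect a substantive obstacle here: the corollary is essentially a transcription of \cref{thm:beliefs} for classically-labelled beliefs, and the only thing that requires care is the bookkeeping of the double-ket convention and of the conjugate/transpose on the $S'$ factor.
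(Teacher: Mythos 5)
Your proposal is correct and follows essentially the same route as the paper: plug the classically-labelled joint state into \cref{thm:beliefs}, take the square root termwise using the block-orthogonality of the $\ket{\str x}$ labels, let the partial trace over $R_1R_1'$ kill the cross terms, and remove the residual conjugation on $S'$ by a partial transpose. The only cosmetic difference is that the paper first proves the general mixed-ensemble version (reducing to $\sum_x p(x)\sqrt{\rho_x}\otimes\sqrt{\rho_x}$) and then specializes to $\rho_x=\ketbra{\psi_x}$, whereas you exploit $\kett{\ketbra{\psi}}=\ket{\psi}\otimes\ket{\psi^*}$ to handle the pure case directly.
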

See \Suppl{} \cite{SM} for its proof and generalization to ensembles of mixed states. As an example of this corollary, the following ensembles are shown to be equivalent:
\begin{enumerate}
    \item Haar-random distribution over pure qubit states:
    \begin{align} \label{eq:Haar_random}
        \beta_{\rm Haar} := \int_{\rm Haar} d\psi \ketbra{\psi}_S \otimes \ketbra{\str\psi}_R\,,
    \end{align}
    where the integral is over the Haar measure and $\{\ket{\str\psi}\}$ is an orthonormal basis for the infinite-dimensional classical system $R$, satisfying $\braket{\str\psi}{\str{\psi'}}=0$ as long as $\ket{\psi}\neq\ket{\psi'}$.
    \item The pure state ensemble consisting of eigenstates of the three Pauli matrices with uniform probability:
    $\beta_{\rm XYZ} := \frac16\sum_{s\in\mathsf{P}}\ketbra{s}_S\otimes\ketbra{\str{s}}_R$ where $\mathsf{P}:=\{0,1,+,-,+i,-i\}$.
\end{enumerate}

This is because the set $\{\ketbra{s}\}_{s\in\mathsf{P}}$ with uniform probability is a projective 2-design and thus satisfies \cite{klappenecker2005mutually,ambainis2007quantum}
\begin{align}
    \int_{\rm Haar} d\psi \ketbra{\psi} \otimes \ketbra{\psi} =\frac16\sum_{s\in\mathsf{P}}\ketbra{s} \otimes \ketbra{s},
\end{align}
which is \cref{eq:ensemble_equiv} applied to $\rho_{\rm Haar}$ and $\rho_{\rm XYZ}$.
The same conclusion naturally generalizes to other projective 2-designs. The resulting updated beliefs after Pauli-Z or -X measurements are listed in \cref{tab:compare3}.

{\em Consequences---}We have shown the distinction between proper and improper mixture in quantum retrodiction. This distinction between proper and improper mixtures is a unique quantum feature. Classically, all mixtures are proper, and the consideration of a system being alone, or being correlated with another system, makes no difference for Bayesian retrodiction \cite{SM}. Note that quantum retrodiction sometimes also refers to parameter estimation \cite{PhysRevLett.86.2455,PhysRevA.67.032112,GJM13}: fundamentally, this is a combination of the \textit{predictive} formalism of quantum mechanics with classical Bayes' theorem; hence, our findings do not alter any results in this branch.

More generally, different results in retrodiction demonstrated in this Letter rely on different subjective understandings of quantum mixtures.
A pragmatist may take our observation as yet another discussion about foundations that has no practical consequence. 
However, they may lead to differences in \textit{action}, notably when retrodiction is used as a recovery channel after dissipation \cite{PhysRevLett.128.020403,barnum-knill,SutterRennerFawzi_2016,sutter-tomamichel-harrow,junge}.
A notable feature of the Petz map is that it is always CPTP, and thus can be implemented deterministically \cite{wilde-recov,gilyn2020quantum}, with applications to error correction \cite{barnum-knill,zheng2024near,biswas2024noise} and reversing dissipative dynamics \cite{kwon-kim,BS21}. Depending on the application, there is not a unique way to choose the prior belief for the Petz map, and according to our results, a wider range of beliefs involving an additional system may be considered. In \cref{fig:recovery} we show the recovery of a depolarizing channel when the true input state is pure, while the prior belief on the system is $\one/2$ with different extended beliefs. The most agnostic belief is the one that does not consider extra systems. In this case, the unavoidable push towards $\one/2$ (which is both the prior and the fixed point of the map) is small: the recovered state stays rather close to the input state. By contrast, the belief of the Church of the Larger Hilbert spaces (improper mixture) is so dogmatic that the recovered state becomes $\one/2$ irrespective of the input state. The linear combination of these two is explored as a confidence indicator in the \Suppl{}~\cite{SM}. The other two extended beliefs, $\beta_1$ and $\beta_{\rm Haar}$, lead to yet different recoveries shown in \cref{fig:recovery}(b) and \cref{fig:recovery}(d). We leave for future work the investigation of the general effect of extended beliefs and the possible application of these results in various information recovery tasks.

\begin{figure*}

\centering
\includegraphics[]{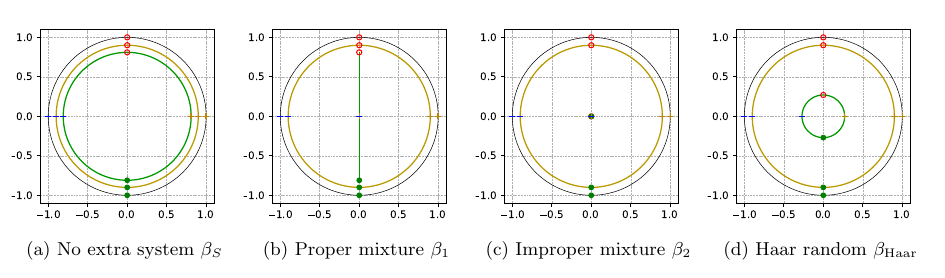}
\caption{\justifying Comparison between different priors for recovering a depolarization error $\mathcal{D}(\rho) := 0.9\rho + 0.1\one/2$. The proper and improper mixtures chosen here are \cref{eq:beta_proper} and \cref{eq:beta_improper} respectively. In all cases, the marginal states of the belief on system $S$ are the same: $\beta_S = \Tr_{R_1}[\beta_1]=\Tr_{R_2}[\beta_2] = \Tr_{R}[\beta_{\rm Haar}] :=\one_S/2$. The figures show the intersection of the Bloch sphere with the x-z plane. The black unit circle represents the original Bloch sphere. The yellow line shows the transformed Bloch sphere after applying $\mathcal{D}$. The green line constitutes the recovered states $(\mathcal{R}_{\rm ext}^{\mathcal{D},\beta}\circ\mathcal{D})(\rho)$. The red circles, green dots, yellow plus signs and blue minus signs represent the states $\ket0,\ket1,\ket+,\ket-$ as prepared, after the channel and after the recovery, respectively. \label{fig:recovery}}
\end{figure*}

{\em Summary and outlook---}We have observed that beliefs about how a mixture arises, indistinguishable from the predictive perspective, serve as fundamentally different priors for retrodiction in quantum information. This is due to purely quantum features: the fact that different preparations may lead to the same mixture, and the fact that a mixture can be even seen as the partial state of a joint pure state. In the latter case, strikingly, no later evidence can update the belief, since it is rooted in a state of maximal knowledge. 
The above observations led us to extend quantum beliefs to include latent correlations. To this effect, we introduced the prior-extended Petz map and the corresponding retrodiction map on the system. We also formulated an equivalence condition for quantum beliefs. 

This study not only enriches the conceptual understanding of quantum mixtures, but also opens new avenues for exploring the role of subjective beliefs and agency in quantum information. Notably, the framework established here has been shown to underlie a unified theory of quantum smoothing~\cite{LiuBaiScarani2025unifying}, which subsumes earlier approaches~\cite{PhysRevLett.115.180407,LiuLaverick2025smoothing}. Popescu proposed a form of fraud detection based on preparation information and called for a formalism to represent such knowledge \cite{popescu2018quantumstatesknowledgepure}. The extended prior offers a natural realization of this idea and may enable quantitative generalization. Our framework will also affect quantum recovery maps beyond the Petz map \cite{parzygnat2023time,surace2023state,parzygnat2023axioms,Baimcp} and have implications for the design and analysis of quantum Bayesian networks~\cite{LEIFER20081899,tucciQBN}, due to its ability to capture quantum knowledge and correlations with additional systems more subtly. Shortly after our paper was made public, we notice that Ref.~\cite{molmerpinske} also discusses different mixtures in the context of retrodiction. However, their aim is to witness system–environment entanglement when allowing joint measurements on both system and environment, which is radically different from ours.

\medskip

\begin{acknowledgments}
This project is supported by the National Research Foundation, Singapore through the National Quantum Office, hosted in A*STAR, under its Centre for Quantum Technologies Funding Initiative (S24Q2d0009). V.S.~acknowledges support from the Ministry of Education, Singapore, under the Tier 2 grant ``Bayesian approach to irreversibility'' (Grant No.~MOE-T2EP50123-0002). G.B.~acknowledges support from the Start-up Fund (Grant No.~G0101000274) from The Hong Kong University of Science and Technology (Guangzhou).
\end{acknowledgments}

\bibliography{refs}
\end{document}